\documentclass[11pt]{amsart}
\usepackage{amsmath,amssymb}
\newtheorem{theorem}{Theorem}
\newtheorem{proposition}[theorem]{Proposition}

\newtheorem{conjecture}[theorem]{Conjecture}
\theoremstyle{definition}
\newtheorem{definition}[theorem]{Definition}
\newtheorem{remark}[theorem]{Remark}
\begin{document} 

\title[Penrose inequality]{A Gibbons-Penrose inequality for surfaces in Schwarzschild spacetime}
\author{Simon Brendle and Mu-Tao Wang}
\begin{abstract}
We propose a geometric inequality for two-dimensional spacelike surfaces in the Schwarzschild spacetime. This inequality implies the Penrose inequality for collapsing dust shells in general relativity, as proposed by Penrose and Gibbons. We prove that the inequality holds in several important cases. 
\end{abstract}
\address{Department of Mathematics \\ Stanford University \\ Stanford, CA 94305}
\address{Department of Mathematics \\ Columbia University \\ 2990 Broadway \\ New York, NY 10027}
\thanks{The first author was supported in part by the National Science Foundation under grant DMS-1201924. The second author was supported in part by the National Science Foundation under grant DMS-1105483. The authors would like to thank Gary Gibbons, Gerhard Huisken, Marc Mars, and Shing-Tung Yau for helpful discussions. In particular, we are grateful to Gerhard Huisken for pointing out the existence of umbilical slices in the Schwarzschild spacetime. The second author would like to acknowledge the hospitality of National Center for Theoretical Sciences (Mathematics Division, Taipei Office) where part of this work was done during his visit.}
\maketitle 

\section{Introduction}

In \cite{Penrose}, Penrose proposed an inequality as a natural consequence of Cosmic Censorship. The original set-up of Penrose consists of a shell of dust collapsing at the speed of light. The null hypersurface swept by the incoming shell separates the spacetime into two components with the flat Minkowski metric inside. Outside the null shell, the metric is no longer flat. The spacetime is vacuum except for a delta distribution of the energy-momentum tensor of matter density supporting along the null hypersurface. 

The Penrose inequality in this case reduces to a geometric inequality on a marginally trapped surface in the null hypersurface. The location and geometry of the marginally trapped surface depends on the matter density which can be arbitrarily prescribed. This inequality should hold for a general spacelike $2$-surface in the Minkowski spacetime with minimal convexity assumptions to guarantee the regularity of the null hypersurface at infinity. It was observed by Gibbons \cite{Gibbons1} that the inequality is exactly the classical Minkowski inequality when the $2$-surface lies in an Euclidean hyperplane. Tod \cite{Tod1, Tod2} studied the case when the $2$-surface lies in the past null cone of a point and derived it from the Sobolev inequality. 

The classical Minkowski inequality was generalized to a mean convex and star-shaped surface using the method of inverse mean curvature flow (cf. \cite{Guan-Li}). Very recently, Huisken \cite{Huisken} showed that the assumption that $\Sigma$ is star-shaped can be replaced by the assumption that $\Sigma$ is outward-minimizing. 

In \cite{Gibbons2}, Gibbons proposed a reduction scheme to approach the Penrose inequality for general surfaces in the Minkowski spacetime. The idea is to project the $2$-surface orthogonally onto an Euclidean hyperplane and to relate to the Minkowski inequality of the projected surface. However, Gibbons' calculation contains a mistake and the validity of this inequality for general surfaces remains open, see Section 7.1 of \cite{Mars}. See also the detailed description in \cite{Mars-Soria}, where the Penrose inequality in the Minkowski spacetime is proven
for a large class of surfaces. 

In \cite{Wang-Yau1, Wang-Yau2}, the authors made use of  Gibbons' projection procedure in their definition of quasi-local mass. It turns out the term that is missing from Gibbons's calculation corresponds a connection $1$-form of the normal bundle with respect to a certain normal frame on the $2$-surface. This term does not vanish in general and is essential in the new definition of quasi-local mass in \cite{Wang-Yau1, Wang-Yau2}.

In \cite{Brendle-Hung-Wang}, a Minkowski type inequality for surfaces in the Anti-deSitter-Schwarzschild manifold was proved using the inverse mean curvature flow and a new Heintze-Karcher type inequality in \cite{Brendle}. When the mass parameter is zero, this inequality implies the Penrose inequality for a $2$-surface that lies on the hyperbola of the Minkowski spacetime, see Section 8 of \cite{Wang}. In this article, we propose a conjecture generalizing the Penrose inequality for surfaces in the Minkowski spacetime. More specifically, the ambient space is the Schwarzschild spacetime. We  prove that the inequality holds in following four cases: (1) when the surface lies in a totally geodesic time slice; (2) when the surface lies in a totally umbilical slice; (3) when the surface lies in a null hypersurface emanating from a sphere of symmetry; (4) when the surface lies in a convex static timelike hypersurface (see Definition \ref{convex.static} for a precise statement).

We remark that the Riemannian Penrose inequality was proved by Huisken-Ilmanen \cite{Huisken-Ilmanen} and Bray \cite{Bray}. For other related work on the Penrose inequality, we refer to \cite{Mars} and references therein.

\section{Statement of the Penrose inequality}

\subsection{Minkowski spacetime}
Let $\Sigma$ be a two-dimensional spacelike closed  embedded orientable surface in the Minkowksi space $\mathbb{R}^{3,1}$. Throughout the article, we assume $\Sigma$ is diffeomorphic to $S^2$. We consider a fixed future timelike vector $T_0 \in \mathbb{R}^{3,1}$ satisfying $\langle T_0,T_0 \rangle = -1$. 

We recall the mean curvature vector field $\vec{H}$ of $\Sigma$, which is the unique normal vector field such that the variation of area of $\Sigma$ in a normal variation field $V$ is given by $-\int_\Sigma \langle \vec{H},V \rangle \, d\mu$. The convention we adopt here makes the mean curvature vector of a standard round sphere inward pointing.  Let $L$ and $\underline{L}$ be two null normals of $\Sigma$ with $\langle L, \underline{L}\rangle=2$. We assume $L$ is future-directed and $\underline{L}$ is past-directed (both outward pointing whenever this makes sense). In terms of $L $ and $\underline{L}$, we have 
\[\vec{H} = \frac{1}{2} \, (\langle \vec{H}, L\rangle \, \underline{L} + \langle \vec{H}, \underline{L}\rangle \, L).\]

The dual mean curvature vector $\vec{J}$ is defined as 
\[\vec{J} = \frac{1}{2} \, (\langle \vec{H}, L \rangle \, \underline{L} - \langle \vec{H}, \underline{L} \rangle \, L). \] 
$\vec{J}$ satisfies $\langle \vec{J}, \vec{J} \rangle = -\langle \vec{H},\vec{H} \rangle$ and $\langle \vec{J}, \vec{H} \rangle = 0$. In fact, $\vec{J}$ is uniquely characterized by these properties, up to a sign. The choice here makes $\vec{J}$ a future timelike vector in case $\vec{H}$ is inward spacelike.

The following inequality for spacelike $2$-surfaces in the Minkowski spacetime was proposed by Penrose in \cite{Penrose}: 

\begin{conjecture} (Penrose)
Suppose that $\Sigma$ is past null convex in the sense that the past null hypersurface generated by $\Sigma$ is smooth. Then 
\begin{equation} 
\label{penrose.inequality}
- \int_\Sigma \langle \vec{J},T_0 \rangle \, d\mu \geq \sqrt{16\pi \, |\Sigma|}. 
\end{equation}
\end{conjecture}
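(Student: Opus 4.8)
The plan is to carry out Gibbons' projection procedure while keeping track of the term his original computation omitted. Fix the time direction $T_0 = \partial_t$ and let $\Pi = T_0^\perp \cong \mathbb{R}^3$ be the orthogonal spacelike hyperplane. Projecting $\Sigma$ orthogonally along $T_0$ onto $\Pi$ produces a surface $\hat{\Sigma} \subset \mathbb{R}^3$, and (at least where the projection is an embedding) $\Sigma$ is the graph $t = \tau$ of a time function $\tau$ over $\hat{\Sigma}$. Since $\Sigma$ is spacelike we have $|\hat{\nabla}\tau| < 1$, where $\hat{\nabla}$ is the gradient on $\hat{\Sigma}$, and the induced metrics are related by $\sigma = \hat{\sigma} - d\tau \otimes d\tau$. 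In particular the area element satisfies $d\mu = \sqrt{1 - |\hat{\nabla}\tau|^2}\, d\hat{\mu}$, so that
\[
|\Sigma| = \int_{\hat{\Sigma}} \sqrt{1 - |\hat{\nabla}\tau|^2}\, d\hat{\mu} \leq |\hat{\Sigma}|.
\]

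Next I would compute the integrand $-\langle \vec{J}, T_0\rangle$ in terms of the geometry of the projection. A short calculation shows that the normal bundle of $\Sigma$ is spanned by the spacelike unit vector $e_3 = \hat{\nu}$, where $\hat{\nu}$ is the outward unit normal of $\hat{\Sigma}$ in $\mathbb{R}^3$, together with the future unit timelike vector $e_0 = (1 - |\hat{\nabla}\tau|^2)^{-1/2}(\hat{\nabla}\tau + \partial_t)$. Writing $\vec{H}$ in this frame and using the definition of $\vec{J}$, one finds $\langle e_3, T_0\rangle = 0$ and $\langle e_0, T_0\rangle = -(1 - |\hat{\nabla}\tau|^2)^{-1/2}$, so that only the spacelike component $\langle \vec{H}, \hat{\nu}\rangle$ survives. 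Expressing this component through the second fundamental form $A$ and mean curvature $H = \mathrm{tr}_{\hat{\sigma}} A$ of $\hat{\Sigma}$ (with respect to $\hat{\nu}$, normalized to be positive on convex surfaces) and simplifying yields the pointwise identity
\[
-\langle \vec{J}, T_0\rangle \, d\mu = \Bigl( H + \frac{A(\hat{\nabla}\tau, \hat{\nabla}\tau)}{1 - |\hat{\nabla}\tau|^2} \Bigr)\, d\hat{\mu}.
\]
The second term on the right is precisely the contribution that Gibbons omitted, corresponding to the connection $1$-form of the normal bundle in the mean curvature gauge; it vanishes exactly when $\Sigma$ lies in a time slice ($\tau \equiv \mathrm{const}$), recovering his observation that the inequality then reduces to the classical Minkowski inequality.

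The final step is to feed this into the Minkowski inequality for $\hat{\Sigma}$. Suppose that the projection $\hat{\Sigma}$ is convex, so that $A \geq 0$ and hence the correction term is nonnegative. Then
\[
-\int_\Sigma \langle \vec{J}, T_0\rangle\, d\mu \geq \int_{\hat{\Sigma}} H\, d\hat{\mu} \geq \sqrt{16\pi\, |\hat{\Sigma}|} \geq \sqrt{16\pi\, |\Sigma|},
\]
where the middle step is the classical Minkowski inequality $\int_{\hat{\Sigma}} H\, d\hat{\mu} \geq \sqrt{16\pi |\hat{\Sigma}|}$ and the last step uses the area comparison above. By the results of Guan--Li and of Huisken quoted in the introduction, convexity of $\hat{\Sigma}$ could be relaxed to mean convexity together with star-shapedness, or to $\hat{\Sigma}$ being outward-minimizing, so it would suffice to place $\hat{\Sigma}$ in any one of these classes.

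The hard part is this last step, and it is exactly where Gibbons' scheme and the general conjecture break down. There is no reason for the orthogonal projection of a past null convex surface to be convex, mean convex and star-shaped, or outward-minimizing; worse, when the time tilt $\tau$ is large the projection need not even be embedded, so $\hat{\Sigma}$ can develop self-intersections and folds and neither the Minkowski inequality nor the area comparison survives in the form used above. Thus the crux is to translate the spacetime hypothesis that the past null hypersurface generated by $\Sigma$ is smooth into a usable Euclidean condition on $\hat{\Sigma}$, simultaneously controlling the sign of $A(\hat{\nabla}\tau, \hat{\nabla}\tau)$ and the admissibility of $\hat{\Sigma}$ for a Minkowski-type inequality. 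In the absence of such a bridge one is forced to impose extra structure, which is why the inequality is accessible only in special configurations — surfaces in a totally geodesic time slice, in a past null cone (Tod's case), or on a hyperboloid — where $\tau$ and the projected geometry are explicitly controlled.
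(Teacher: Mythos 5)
The statement you were asked about is labeled as a conjecture in the paper (attributed to Penrose); the paper offers no proof of it and it remains open, so no complete argument can be expected, and your proposal is honestly calibrated on this point. What you do establish is correct and coincides with the paper's own partial result: your projection identity, in which $-\langle \vec{J},T_0\rangle\,d\mu$ equals $\hat{H}\,d\hat{\mu}$ plus a correction term proportional to $A(\hat{\nabla}\tau,\hat{\nabla}\tau)$, together with the area comparison $|\Sigma|\le|\hat{\Sigma}|$, is precisely the computation the authors carry out in Section 3.4 for surfaces lying in a convex static timelike hypersurface, specialized to $m=0$ and $f\equiv 1$. There they derive $-\langle \vec{H},\nu\rangle = \hat{H} + \frac{f^2 g^{ac} g^{bd}\partial_c\tau\,\partial_d\tau}{1+f^2|\nabla\tau|^2}\,(\hat{h}_{ab}-f^{-1}\nu(f)\,\hat{g}_{ab})$ and invoke the convex static hypothesis to make the correction nonnegative, exactly as you invoke convexity of $\hat{\Sigma}$ to control the sign of $A(\hat{\nabla}\tau,\hat{\nabla}\tau)$; the relations $-\langle\vec{J},T_0\rangle=-\langle\vec{H},\nu\rangle\sqrt{1+f^2|\nabla\tau|^2}$ and $d\hat{\mu}=\sqrt{1+f^2|\nabla\tau|^2}\,d\mu$ that you use are the same ones appearing there. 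The one thing to state plainly is that your convexity (or mean convexity plus star-shapedness, or outward-minimizing) hypothesis on the projection is a genuinely additional assumption not implied by past null convexity of $\Sigma$ --- which you acknowledge --- so what you have proved is case (4) of the paper's main theorem at $m=0$ (and, when $\tau$ is constant, case (1)), not the conjecture itself. Your closing diagnosis of the obstruction --- that the smoothness of the past null hypersurface has no known translation into an admissibility condition on $\hat{\Sigma}$ for a Minkowski-type inequality, and that the projection may fail to be embedded --- agrees with the paper's own discussion of the gap in Gibbons' scheme and with the cited account in Mars and Mars--Soria.
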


By the divergence theorem, we have 
\[\int_\Sigma \langle \vec{H},T_0 \rangle \, d\mu = 0.\] 
This implies 
\begin{equation}
-\int_\Sigma \langle \vec{H},L \rangle \, \langle \underline{L},T_0 \rangle \, d\mu = \int_\Sigma \langle \vec{H},\underline{L} \rangle \, \langle L,T_0 \rangle \, d\mu = -\int_\Sigma \langle \vec{J},T_0 \rangle \, d\mu.
\end{equation}
Thus, the inequality \eqref{penrose.inequality} can be rewritten as 
\begin{equation} 
-\int_\Sigma \langle \vec{H},L \rangle \, \langle \underline{L},T_0 \rangle \, d\mu \geq \sqrt{16\pi \, |\Sigma|}. 
\end{equation}
This formulation is independent of the choice of $L$ and $\underline{L}$ except the normalization $\langle L,\underline{L} \rangle = 2$. If we choose $\underline{L}$ such that $\langle \underline{L},T_0 \rangle=1$, then the inequality \eqref{penrose.inequality} can be written in the form 
\begin{equation} 
\label{null.exp}
\int_\Sigma \theta \, d\mu \geq \sqrt{16\pi \, |\Sigma|}, 
\end{equation}
where $\theta = -\langle \vec{H},L \rangle$ corresponds to the future (outward) null expansion. Note that the inequality \eqref{null.exp} is equivalent to the inequality (51) in \cite{Mars}.

\subsection{Schwarzschild spacetime}

The Schwarzschild spacetime metric is given by 
\begin{equation}
\label{Sch_coor}
-(1-\frac{2m}{r}) \, dt^2 + \frac{1}{1-\frac{2m}{r}} \, dr^2 + r^2 \, g_{S^2}, 
\end{equation}
where $g_{S^2} = d\theta^2 + \sin^2\theta \, d\phi^2$ is the round metric on $S^2$.

Let $\Sigma$ be a closed embedded orientable spacelike $2$-surface in the Schwarzschild spacetime. Let $L$ and $\underline{L}$ be two null normals of $\Sigma$ with $\langle L,\underline{L} \rangle = 2$. Again we assume $L$ is future-directed and $\underline{L}$ is past-directed. 

Since $\frac{\partial}{\partial t}$ is a Killing field, we have 
\[\int_\Sigma \langle \vec{H},\frac{\partial}{\partial t} \rangle \, d\mu = 0.\]
As above, this implies 
\[-\int_\Sigma \langle \vec{H},L \rangle \, \langle \underline{L},\frac{\partial}{\partial t} \rangle \, d\mu = \int_\Sigma \langle \vec{H},\underline{L} \rangle \, \langle L,\frac{\partial}{\partial t} \rangle \, d\mu = -\int_\Sigma \langle \vec{J},\frac{\partial}{\partial t} \rangle \, d\mu,\]
where 
\[\vec{J} = \frac{1}{2} \, (\langle \vec{H}, L \rangle \, \underline{L} - \langle \vec{H}, \underline{L} \rangle \, L)\] 
denotes the dual mean curvature vector.

\begin{conjecture}
\label{conjecture} 
Let $\Sigma$ be a spacelike $2$-surface in the Schwarzschild spacetime. Suppose that the past null hypersurface generated by $\Sigma$ is smooth. Then 
\begin{equation}
\label{Sch.penrose1} 
-\int_\Sigma \langle \vec{J}, \frac{\partial}{\partial t} \rangle \, d\mu + 16\pi m \geq \sqrt{16\pi|\Sigma|}. 
\end{equation} 
Here $m$ is the total mass of the Schwarzschild spacetime.
\end{conjecture}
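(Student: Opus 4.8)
The plan is to attack Conjecture \ref{conjecture} through the projection scheme of Gibbons \cite{Gibbons2}, corrected by the connection one-form of the normal bundle identified in \cite{Wang-Yau1, Wang-Yau2}, reducing the spacetime inequality to a weighted Minkowski-type inequality on the totally geodesic time slice $\{t=0\}$. This slice is a Riemannian Schwarzschild manifold $(M,\bar g)$ with $\bar g = (1-\frac{2m}{r})^{-1}\,dr^2 + r^2\,g_{S^2}$ and static lapse $N = \sqrt{1-\frac{2m}{r}}$. Case $(1)$ treated in the paper is the degenerate instance in which $\Sigma$ already lies in $\{t=0\}$; the content of the general conjecture is exactly that the reduction survives a nontrivial time function.

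First I would write $\Sigma$ as a graph $\{(f(x),x):x\in\hat\Sigma\}$ over its orthogonal projection $\hat\Sigma\subset M$, the spacelike condition forcing $N^2|\hat\nabla f|^2<1$. Using that $\frac{\partial}{\partial t}$ is Killing we have $\int_\Sigma\langle\vec H,\frac{\partial}{\partial t}\rangle\,d\mu=0$, so it suffices to estimate $-\int_\Sigma\langle\vec H,L\rangle\,\langle\underline L,\frac{\partial}{\partial t}\rangle\,d\mu$. Expanding the null expansions in terms of the mean curvature $\hat H$ of $\hat\Sigma$ in $M$, the gradient $\hat\nabla f$, and the connection one-form $\alpha$ of the normal bundle of $\Sigma$ (the object dropped in \cite{Gibbons2}), the boosted density expands as $N\hat H$ plus correction terms quadratic in $\hat\nabla f$ and in $\alpha$. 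If these corrections assemble into a single nonnegative quantity — equivalently, if the full density is a sum of squares under a root — then discarding them yields the clean bound
\[
-\int_\Sigma\langle\vec J,\tfrac{\partial}{\partial t}\rangle\,d\mu \;\geq\; \int_{\hat\Sigma}N\,\hat H\,d\hat\mu .
\]

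The second ingredient is a weighted Minkowski inequality on $(M,\bar g)$ of the form
\[
\int_{\hat\Sigma}N\,\hat H\,d\hat\mu + 16\pi m \;\geq\; \sqrt{16\pi\,|\hat\Sigma|},
\]
which reduces to the classical Minkowski inequality as $m\to 0$. I would prove it by running the inverse mean curvature flow of $\hat\Sigma$ and establishing monotonicity of a weighted Hawking-type functional through the Geroch calculation, as in the Anti-de-Sitter-Schwarzschild analysis of \cite{Brendle-Hung-Wang}, or alternatively via the Heintze-Karcher inequality of \cite{Brendle}; the weight $N$ is the static potential and the additive $16\pi m$ records the horizon as a minimal inner boundary. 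Finally, because $\Sigma$ is spacelike its area element $\sqrt{1-N^2|\hat\nabla f|^2}\,d\hat\mu$ is dominated by $d\hat\mu$, so $|\hat\Sigma|\geq|\Sigma|$ and hence $\sqrt{16\pi|\hat\Sigma|}\geq\sqrt{16\pi|\Sigma|}$; chaining the three estimates gives \eqref{Sch.penrose1}.

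The hard part will be the reduction step, which is precisely where \cite{Gibbons2} failed: the correction terms couple $\hat\nabla f$ to the connection one-form $\alpha$, and this cross term need not be of definite sign, so the passage to $\int_{\hat\Sigma}N\hat H\,d\hat\mu$ is not automatic and may fail for a general time function. A complete argument must show that the $\alpha$-dependent contribution can always be absorbed, a phenomenon consistent with the essential role of $\alpha$ in the quasi-local mass of \cite{Wang-Yau1, Wang-Yau2}. Equally serious, a general past-null-convex $\Sigma$ need not project to a star-shaped, mean-convex, or outward-minimizing surface, and the inverse mean curvature flow in $M$ may reach the horizon, so the weighted Minkowski inequality must be pushed to the weak convexity available here. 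A complementary route that sidesteps the projection entirely would be to run a Geroch-type monotonicity directly along the smooth past null hypersurface, mirroring the sharp Minkowski-spacetime analysis of \cite{Mars-Soria}; I expect the same connection-one-form obstruction to resurface there as the flux term in the Raychaudhuri equation.
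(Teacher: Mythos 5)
The statement you are attempting is stated in the paper as a \emph{conjecture}: the paper does not prove it in general, but only in the four special cases listed in its Theorem, and your proposal does not close the gap that keeps the general case open. The decisive step of your plan --- that after projecting along $\frac{\partial}{\partial t}$ the boosted density equals $N\hat{H}$ plus corrections which ``assemble into a single nonnegative quantity'' --- is exactly the point where Gibbons' scheme \cite{Gibbons2} breaks down. As the paper recalls, the term missing from Gibbons' calculation is the connection one-form $\alpha$ of the normal bundle \cite{Wang-Yau1, Wang-Yau2}; it does not vanish in general, and the cross term coupling $\alpha$ to $\hat{\nabla}f$ has no definite sign. You name this obstruction yourself in your final paragraph but supply no mechanism for absorbing it, so the first of your three chained estimates is never established: the proposal is a program, not a proof, and in its stated generality the inequality remains open (cf.\ Section 7.1 of \cite{Mars} and \cite{Mars-Soria}). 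A second unproved ingredient compounds this: for a general past-null-convex $\Sigma$, the projection $\hat{\Sigma}$ need not be mean convex or star-shaped, whereas the weighted Minkowski inequality $\int_{\hat{\Sigma}} N\hat{H}\,d\hat{\mu} + 16\pi m \geq \sqrt{16\pi|\hat{\Sigma}|}$ is only known under those hypotheses (it is the paper's case (1), obtained from \cite{Brendle-Hung-Wang} by sending the cosmological constant to zero).

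It is worth noting where your outline does work, because it coincides almost verbatim with the paper's proof of case (4). If $\Sigma$ lies in a convex static timelike hypersurface $B$ in the sense of Definition \ref{convex.static}, then the unit normal $\nu$ of $B$ is automatically normal to $\Sigma$, no $\alpha$-cross term arises, and the correction in the mean curvature computation is
\[
\frac{f^2 \, g^{ac} \, g^{bd} \, \partial_c \tau \, \partial_d \tau}{1 + f^2 \, |\nabla \tau|^2} \, \bigl(\hat{h}_{ab} - f^{-1} \, \nu(f) \, \hat{g}_{ab}\bigr),
\]
which is nonnegative precisely by the convex static condition $\hat{h}_{ab} \geq f^{-1}\,\nu(f)\,\hat{g}_{ab} > 0$ --- the same condition that guarantees $\hat{\Sigma}$ is convex and star-shaped, so that the weighted Minkowski inequality applies; the chain then closes with $|\hat{\Sigma}| \geq |\Sigma|$, exactly as in your last step. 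The paper's remaining cases (umbilical slices, handled via an isometry with the Anti-deSitter-Schwarzschild manifold and \cite{Brendle-Hung-Wang}, and null cones, obtained as limits $\lambda \to \infty$ of umbilical slices) deliberately avoid the projection argument altogether. In short: your proposal correctly identifies all the ingredients and even the obstruction, but everything it actually establishes is contained in the paper's case (4); the ``hard part'' you flag is not a technical loose end but the open problem itself.
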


Of course, an equivalent formulation is 
\begin{equation} 
\label{Sch.penrose2} 
-\int_\Sigma \langle \vec{H},L \rangle \, \langle \underline{L},\frac{\partial}{\partial t} \rangle \, d\mu + 16\pi m \geq \sqrt{16\pi|\Sigma|}. 
\end{equation}
The Schwarzschild spacetime belongs to the larger class of static spacetimes. We recall that a spacetime $S$ is \textit{static} if the metric is of the form $-\Omega^2 \, dt^2+g_M$ where $g_M$ is a Riemannian metric on a 3-manifold $M$ and $\Omega$ is a smooth function defined on $M$. 

\begin{definition}
\label{convex.static}
Let $B$ be a complete timelike hypersurface in a static spacetime $S$. We say that $B$ is \textit{convex static} if $B =\{(t, x) :  t\in \mathbb{R}, x\in \hat{\Sigma} \}$ for some 2-surface $\hat{\Sigma} \subset M$, and the second fundamental form  $\hat{h}_{ab}$ and the induced metric $\hat{g}_{ab}$ of $\hat{\Sigma}$ in $M$ satisfies $\hat{h}_{ab} \geq \Omega^{-1} \, \hat{\nu}(\Omega) \, \hat{g}_{ab} > 0$. Here, $\hat{\nu}$ denotes the outward-pointing unit normal to $\hat{\Sigma}$.
\end{definition}

The condition $\hat{h}_{ab} \geq \Omega^{-1} \, \hat{\nu}(\Omega) \, \hat{g}_{ab} > 0$ has a natural geometric interpretation: it implies that the second fundamental form $\Pi$ of the timelike hypersurface $B$ is nonnegative when evaluated at a null vector, i.e. $\Pi(X, X)\geq 0$ if $X$ is null and tangent to $B$. 

In this paper, we prove that the inequality holds for a large class of spacelike $2$-surfaces in the Schwarzschild spacetime.

\begin{theorem} 
Let $\Sigma$ be a closed embedded orientable spacelike $2$-surface in the Schwarzschild spacetime. The Gibbons-Penrose inequality \eqref{Sch.penrose1} holds in the following cases:
\begin{enumerate}
\item $\Sigma$ lies in a totally geodesic spacelike hypersurface and $\Sigma$ is mean convex and star-shaped.
\item $\Sigma$ lies in a totally umbilical (spherically symmetric) spacelike hypersurface and $\Sigma$ is mean convex and star-shaped.
\item $\Sigma$ lies in a outward directed null hypersurface emanating from a sphere of symmetry.
\item $\Sigma$ lies in a convex static timelike hypersurface. 
\end{enumerate}
\end{theorem}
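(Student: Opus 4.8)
\emph{Proof strategy.} The plan is to handle the four cases by a common reduction followed by case-specific estimates. In every case the starting point is to choose the null normals $L,\underline{L}$ adapted to the hypersurface containing $\Sigma$ and to compute $\langle \vec{J},\frac{\partial}{\partial t}\rangle$ explicitly in terms of the intrinsic and extrinsic geometry of that hypersurface. Once this is done, \eqref{Sch.penrose1} becomes a purely Riemannian inequality (or, in case (3), a purely two-dimensional one), which I then attack with inverse mean curvature flow, with a Sobolev inequality, or with a convexity comparison.

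\emph{Cases (1) and (2).} Let $\{t=0\}$ be the totally geodesic slice, with future unit normal $e_0=\Omega^{-1}\frac{\partial}{\partial t}$ and static potential $\Omega=\sqrt{1-\frac{2m}{r}}$. Since the slice is totally geodesic, the spacetime mean curvature vector of $\Sigma$ has no $e_0$-component; writing $\nu$ for the outward unit normal of $\Sigma$ in the slice and $H$ for its scalar mean curvature, one finds $\vec{H}=-H\nu$ and hence $\vec{J}=H e_0$. Consequently $\langle \vec{J},\frac{\partial}{\partial t}\rangle=-\Omega H$, and \eqref{Sch.penrose1} is equivalent to the weighted Minkowski inequality
\[
\int_\Sigma \Omega H\,d\mu + 16\pi m \geq \sqrt{16\pi|\Sigma|}
\]
in the spatial Schwarzschild manifold. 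A direct check shows this is sharp, with equality on every centered sphere $\{r=r_0\}$ and on the horizon $\{r=2m\}$. I would prove it by running inverse mean curvature flow from $\Sigma$ (which stays smooth, star-shaped and mean convex) and showing that the Geroch-type functional
\[
Q(t)=\frac{1}{\sqrt{16\pi|\Sigma_t|}}\Big(\int_{\Sigma_t}\Omega H\,d\mu + 16\pi m\Big)
\]
is nonincreasing and tends to $1$ as $t\to\infty$. The monotonicity computation is the technical heart here; it uses the scalar-flatness of the slice together with the static vacuum relations $\Delta\Omega=0$ and $\operatorname{Hess}\Omega=\Omega\operatorname{Ric}$. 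For case (2) the slice is umbilical with second fundamental form $\lambda g_{ab}$, so the $e_0$-component of $\vec{H}$ no longer vanishes; the same reduction then gives $\vec{J}=H e_0-2\lambda\nu$ together with an additional lower-order term involving $\lambda$, but since the umbilical slice is again spherically symmetric the inverse mean curvature flow argument carries over once the correct weight is recorded.

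\emph{Case (3).} Here $\Sigma$ is a section of the outgoing null hypersurface $N$ emanating from a sphere of symmetry $S_{r_0}$, so I would parametrize $\Sigma$ as a radial graph $r=\rho(\omega)$, $\omega\in S^2$, over the null generators of $N$. Taking $L$ along the generators, the optical (Raychaudhuri) structure of $N$ --- in which the ambient Ricci curvature vanishes --- lets me express $\theta=-\langle \vec{H},L\rangle$, the area element, and $\langle \vec{J},\frac{\partial}{\partial t}\rangle$ as explicit functionals of $\rho$ and its gradient on $S^2$. The inequality \eqref{Sch.penrose1} then reduces to a Sobolev-type inequality on $S^2$, with the round profile $\rho\equiv\mathrm{const}$ as the extremal configuration; this is the Schwarzschild analogue of Tod's derivation on the Minkowski light cone, and I expect it to follow from the sharp Sobolev inequality after accounting for the $m$-dependent terms.

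\emph{Case (4).} This is the case I expect to be the main obstacle. Let $B=\mathbb{R}\times\hat\Sigma$ be the convex static timelike hypersurface containing $\Sigma$. The point of Definition \ref{convex.static} is that the second fundamental form $\Pi$ of $B$ satisfies $\Pi(X,X)\geq 0$ for every null $X$ tangent to $B$, and this null-convexity is exactly the ingredient that makes a corrected version of Gibbons' projection argument work. Concretely, I would compare $\Sigma$ with its image under projection along the orbits of $\frac{\partial}{\partial t}$ onto the constant-time section $\hat\Sigma\subset\{t=0\}$, and use $\Pi(X,X)\geq 0$ along the generators of the past null hypersurface to control the sign of the comparison; since $\hat\Sigma$ lies in the totally geodesic slice, case (1) would then finish the argument. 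The delicate points are that the argument is genuinely four-dimensional and that one must keep track of the connection one-form of the normal bundle --- precisely the term omitted in \cite{Gibbons2} --- whose contribution is governed by the convexity hypothesis; making this monotonicity rigorous, together with the regularity of the past null hypersurface, is where the real work lies.
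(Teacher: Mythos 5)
Your reductions for cases (1) and (4) match the paper's. In case (1) the identity $\vec{J}=He_0$ and the resulting weighted Minkowski inequality $\int_\Sigma \Omega H\,d\mu+16\pi m\ge\sqrt{16\pi|\Sigma|}$ are exactly right; the paper simply quotes this from \cite{Brendle-Hung-Wang} (letting the cosmological constant tend to zero), whereas you propose to reprove it via a single monotone Geroch-type quantity under inverse mean curvature flow. Be aware that the proof in \cite{Brendle-Hung-Wang} is not a one-functional monotonicity: it combines inverse mean curvature flow with a separate Heintze--Karcher-type inequality from \cite{Brendle}, so your ``technical heart'' is substantially harder than advertised. In case (4) your outline --- project along $\partial/\partial t$, use the null-convexity of $B$ to control the comparison, finish with case (1) --- is the paper's argument; the actual computation is a pointwise inequality $-\langle\vec{H},\nu\rangle\ge\hat{H}$ coming from positive semidefiniteness of $\hat{h}_{ab}-f^{-1}\nu(f)\hat{g}_{ab}$, followed by $-\langle\vec{J},\partial/\partial t\rangle=-\langle\vec{H},\nu\rangle f\sqrt{1+f^2|\nabla\tau|^2}$, the volume-element relation $d\hat{\mu}=\sqrt{1+f^2|\nabla\tau|^2}\,d\mu$, and $|\hat{\Sigma}|\ge|\Sigma|$. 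No monotonicity along the past null hypersurface and no normal-connection one-form actually enter; the argument is cleaner than you anticipate.

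The genuine gaps are in cases (2) and (3). For case (2), writing $\vec{J}=He_0-2\lambda\nu$ and saying the extra term is ``lower-order'' and that ``the inverse mean curvature flow argument carries over once the correct weight is recorded'' skips the entire content of the proof. The key facts are: the umbilical slice is isometric to the three-dimensional Anti-deSitter-Schwarzschild manifold with static potential $f=\sqrt{1-\frac{2m}{s}+\lambda^2s^2}$; one has $-\langle e_0,\partial/\partial t\rangle=f$ and $(\partial/\partial t)^\top=-sf\,\partial/\partial s$; and the extra term $2\int_\Sigma\langle\nu,\partial/\partial t\rangle\,d\mu$ converts, via the divergence theorem applied to the conformal Killing field $sf\,\partial/\partial s$ (with $\mathrm{div}(sf\partial/\partial s)=3f$), into $-6\int_\Omega f\,d\mathrm{vol}-8\pi s_0^3$. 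This is precisely the bulk volume term appearing in the Minkowski-type inequality $\int_\Sigma fH\,d\mu-6\int_\Omega f\,d\mathrm{vol}\ge\sqrt{16\pi|\Sigma|}-8\pi s_0$ of \cite{Brendle-Hung-Wang}, and the constants close up only because $s_0^3+s_0=2m$. Without identifying the slice as Anti-deSitter-Schwarzschild and matching the $\nu$-term to the volume integral, there is no existing theorem for your ``weight'' to plug into. For case (3), your proposal to reduce to a Sobolev inequality on $S^2$ in the style of Tod is not an argument but a hope; Tod's computation is specific to the Minkowski light cone, and you give no indication of how the $16\pi m$ term and the non-flat optical equations would be absorbed. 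The paper sidesteps this entirely: the null hypersurface $N$ is the $\lambda\to\infty$ limit of the umbilical slices $\hat{M}_\lambda$, the surfaces $\Sigma_\lambda\subset\hat{M}_\lambda$ converge to $\Sigma$, positivity of the outgoing null expansion of $\Sigma$ guarantees mean convexity of $\Sigma_\lambda$ for large $\lambda$, and case (3) follows from case (2) by passing to the limit. If you want a self-contained proof you should adopt this approximation argument rather than attempt a new sharp Sobolev inequality.
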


We remark that by taking $m=0$, these give rise to the Penrose inequality in the Minkowski spacetime in the corresponding cases (see also \cite{Wang}).

\section{Proof of the inequality in four cases}

\subsection{Surfaces in a totally geodesic time slice}
We first check the case when $\Sigma$ lies in a totally geodesic time-slice ($t=0$) and thus the induced metric is 
\[\frac{1}{1-\frac{2m}{r}} \, dr^2+r^2 \, g_{S^2}.\] 
The future timelike unit normal is given by $e_0 = \frac{1}{\sqrt{1-\frac{2m}{r}}} \, \frac{\partial}{\partial t}$. Let $L = e_0+\nu$ and $\underline{L} = -e_0+\nu$ be the two null normals where $\nu$ is the outward unit normal of $\Sigma$ in the time-slice. We compute $\vec{H} = -H\nu$ and $\vec{J} = He_0$. This gives 
\[-\int_\Sigma \langle \vec{J},\frac{\partial}{\partial t} \rangle \, d\mu = -\int_\Sigma H \, \langle e_0,\frac{\partial}{\partial t} \rangle \, d\mu = \int_\Sigma H \, \sqrt{1-\frac{2m}{r}} \, d\mu.\] 
Thus, the inequality \eqref{Sch.penrose1} in this case is equivalent to 
\begin{equation} 
\label{eq_totally_geodesic} 
\int_\Sigma H \sqrt{1-\frac{2m}{r}} \, d\mu + 16\pi m \geq \sqrt{16\pi|\Sigma|}. 
\end{equation} 
Notice that the horizon area $|\partial M|$ equals $16\pi m^2$, and the static potential for the Schwarzschild space-time is $\sqrt{1-\frac{2m}{r}}$. Hence, the inequality \eqref{eq_totally_geodesic} follows from results in \cite{Brendle-Hung-Wang}. (Theorem 1 in \cite{Brendle-Hung-Wang} works for arbitrary negative cosmological constant, and the result needed here follows by sending the cosmological constant to $0$.)

\subsection{Surfaces in a totally umbilical slice}

We claim that the inequality in Theorem 1 of \cite{Brendle-Hung-Wang} for surfaces in the Anti-deSitter-Schwarzschild manifold corresponds to inequality \eqref{Sch.penrose1} for surfaces in a spherically symmetric umbilical slice of the Schwarzschild space-time. Let us recall the definition of the three-dimensional Anti-deSitter-Schwarzschild manifold \footnote{The definition here is slightly different from \cite{Brendle} and \cite{Brendle-Hung-Wang}, as we use $2m$ as the mass parameter instead of $m$}. We fix a real number $m > 0$, and let $s_0$ denote the unique positive solution of the equation 
\begin{equation} 
\label{eq_s_0} 
1 - 2m \, s_0^{-1} + \lambda^2 s_0^2 = 0. 
\end{equation} 
We then consider the manifold $M = S^2 \times [s_0,\infty)$ equipped with the Riemannian metric 
\[\bar{g} = \frac{1}{1 - 2m \, s^{-1} + \lambda^2 s^2} \, ds^2 + s^2 \, g_{S^2},\]
where $g_{S^2}$ is the standard round metric on the unit sphere $S^2$. The sectional curvatures of $(M,\bar{g})$ approach $-1$ near infinity, so $\bar{g}$ is asymptotically hyperbolic. Moreover, the scalar curvature of $(M,\bar{g})$ equals $-6$. The boundary $\partial M = S^2 \times \{s_0\}$ is referred to as the horizon.

Now we return to the Schwarzschild spacetime. Consider a function $\rho=\rho(s)$ that satisfies 
\[\rho'(s) = \frac{\lambda s}{(1-\frac{2m}{s})\sqrt{1-\frac{2m}{s}+\lambda^2 s^2}}\] 
for some constant $\lambda>0$. Take the embedding of $(2m, \infty)\times S^2$ into the Schwarzschild space-time by $\Psi(s,\theta,\phi) = (\rho(s),s,\theta,\phi)$ in Schwarzschild coordinates $(t,r,\theta,\phi)$ and denote the image by $\hat{M}=\{(t,r,\theta,\phi): t=\rho(s), r=s\}$. 

Substituting $t=\rho(s)$ and $r=s$ in \eqref{Sch_coor}, it follows that the induced metric on $\hat{M}$ is given by
\[\frac{1}{1-\frac{2m}{s}+\lambda^2 s^2} ds^2 + s^2 \, g_{S^2},\] 
which is isometric to the one on an Anti-deSitter-Schwarzschild three-manifold $M$.

\begin{remark}
The function $\rho$ appears to be only defined on $(2m, \infty)$. However, we can extend $\hat{M}$ in an extension of Schwarzschild space-time so that the domain of definition of $s$ extends to $(s_0,\infty)$ where $s_0$ is the unique positive root of $1-\frac{2m}{s}+\lambda^2 s^2$. We refer to Section 6 of \cite{Mars} where such an extension is carried out by using advanced Eddington-Finkelstein coordinates. In any case, we shall denote by $M$ the one that is extended to $(s_0,\infty)$ which is referred as the Anti-deSitter-Schwarzschild manifold in \cite{Brendle-Hung-Wang}. Note that $\hat{M} \subset M$ is an isometric embedding.
\end{remark}

\begin{proposition} 
The hypersurface $\hat{M}$ is umbilical, i.e the second fundamental form is proportional to the induced metric. 
\end{proposition}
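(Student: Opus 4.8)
The plan is to compute the second fundamental form of $\hat{M}$ directly in an adapted frame and to exploit the spherical symmetry to reduce umbilicity to a single scalar identity. On $\hat{M}$ I use coordinates $(s,\theta,\phi)$; the tangent space is spanned by $E_1 = \partial_s\Psi = \rho'(s)\,\partial_t + \partial_r$ together with the sphere directions $\partial_\theta,\partial_\phi$. Writing $f = 1-\tfrac{2m}{s}$ and $W = f + \lambda^2 s^2$, I first determine the unit normal. By spherical symmetry the normal has no angular component, so it must lie in the span of $\partial_t,\partial_r$; imposing orthogonality to $E_1$ and unit length (the normal will be timelike, since the induced metric is Riemannian) pins it down, after using the explicit ODE $\rho'(s) = \tfrac{\lambda s}{f\sqrt{W}}$, as
\[N = \frac{\sqrt{W}}{f}\,\partial_t + \lambda s\,\partial_r,\qquad \langle N,N\rangle = -1.\]

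Next I would use spherical symmetry to cut down the work. Since $\hat{M}$ and $N$ are $SO(3)$-invariant, the second fundamental form $\Pi(X,Y) = \langle \nabla_X Y, N\rangle$ is diagonal in the frame $\{E_1,\partial_\theta,\partial_\phi\}$, and the two angular directions carry a common principal curvature. Hence umbilicity is equivalent to the single identity $\Pi(E_1,E_1)/g(E_1,E_1) = \Pi(\partial_\theta,\partial_\theta)/g(\partial_\theta,\partial_\theta)$. The angular side is immediate: from the Schwarzschild Christoffel symbols $\nabla_{\partial_\theta}\partial_\theta = -s f\,\partial_r$, so that $\Pi(\partial_\theta,\partial_\theta) = -sf\,\langle\partial_r,N\rangle = -\lambda s^2$, while $g(\partial_\theta,\partial_\theta)=s^2$; the ratio is the constant $-\lambda$.

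The crux is the radial term $\Pi(E_1,E_1)$, which I would compute as the acceleration $\nabla_{\dot\gamma}\dot\gamma$ of the coordinate curve $\gamma(s) = (\rho(s),s,\theta_0,\phi_0)$, using the geodesic formula with $\dot t = \rho'$, $\dot r = 1$ and the connection coefficients $\Gamma^t_{tr},\Gamma^r_{tt},\Gamma^r_{rr}$. This is where the argument is delicate: the $t$-component involves $\rho''$, and pairing with $N$ produces several terms with denominators $fW$. The main obstacle is to show that these collapse correctly. Differentiating the ODE for $\rho'$ and substituting, the numerator reorganizes into $(1-\tfrac{3m}{s}) + \tfrac{m}{s} = 1 - \tfrac{2m}{s} = f$, which cancels the $f$ in the denominator and leaves
\[\Pi(E_1,E_1) = -\frac{\lambda}{W}, \qquad g(E_1,E_1) = \frac{1}{W},\]
so the radial ratio is again $-\lambda$. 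Comparing with the angular ratio gives $\Pi = -\lambda\,g$ identically, proving that $\hat{M}$ is totally umbilical (with constant umbilicity factor $-\lambda$, up to the orientation of $N$). I expect the only genuine work to be this cancellation, which is forced precisely by the chosen form of $\rho'$.
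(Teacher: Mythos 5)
Your proposal is correct and follows essentially the same route as the paper: both identify the same unit normal $N=\frac{\sqrt{W}}{f}\partial_t+\lambda s\,\partial_r$ (the paper's $e_0$, with your $f,W$ playing the roles of the paper's $b,f^2$) and compute the second fundamental form directly in an adapted frame, using the explicit ODE for $\rho'$ to force the common umbilicity factor $\lambda$. The only cosmetic difference is that the paper works with an orthonormal coframe, which packages both the radial and angular components as $\frac{d}{ds}(\lambda s)$ and $\frac{1}{s}(\lambda s)$ of the single quantity $\frac{b\rho'}{\sqrt{b^{-1}-b(\rho')^2}}=\lambda s$, whereas you compute the radial component via Christoffel symbols; your stated values $\Pi(E_1,E_1)=-\lambda/W$ and $g(E_1,E_1)=1/W$ check out.
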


\begin{proof} 
Let $b(s)=1-\frac{2m}{s}$ and $f(s)=\sqrt{1-\frac{2m}{s}+\lambda ^2 s^2}$. We have the following relation:
\[b^{-1}-(\rho')^2 b = f^{-2}.\] 
An orthonormal coframe adapted to the hypersurface $\hat{M}$ is given by 
\[\theta^0=\frac{1}{\sqrt{b^{-1}-b(\rho')^2}} \, (dt-\rho' \, dr) = f(s) \, (dt-\rho' \, dr),\]
\[\theta^1=\frac{1}{\sqrt{b^{-1}-b(\rho')^2}} \, (b\rho' \, dt-b^{-1} \, dr) = f(s) \, (b\rho' \, dt-b^{-1} \, dr),\]
\[\theta^2=s \, d\theta,\] and
\[\theta^3=s \sin\theta \, d\phi,\] 
where $\theta^0$ is the unit conormal that is dual to the unit future timelike normal 
\begin{equation} 
\label{unit_normal} 
e_0=\frac{f(s)}{b(s)} \, \frac{\partial}{\partial t} + \lambda s \, \frac{\partial}{\partial r}. 
\end{equation}
The second fundamental form can be computed using this coframe and we derive
\[p_{11}=\frac{d}{ds}(\frac{b\rho'}{\sqrt{b^{-1}-b(\rho')^2}}),\] 
\[p_{22}=p_{33}=\frac{1}{s} \, \frac{b\rho'}{\sqrt{b^{-1}-b(\rho')^2}}.\]
We check that 
\[\frac{b\rho'}{\sqrt{b^{-1}-b(\rho')^2}}=\lambda s.\] 
Thus, $p_{11}=p_{22}=p_{33}=\lambda$ and $\hat{M}$ is umbilical.  
\end{proof}

\begin{proposition}
\label{Sigma.lies.in.an.umbilic.hypersurface}
For a spacelike $2$-surface $\Sigma$ in $\hat{M}$ that is mean convex and star-shaped, the inequality \eqref{Sch.penrose1} holds.
\end{proposition}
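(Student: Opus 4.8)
The plan is to choose null normals adapted to $\hat M$, identify the main part of $-\int_\Sigma\langle\vec J,\frac{\partial}{\partial t}\rangle\,d\mu$ with the weighted mean-curvature functional of \cite{Brendle-Hung-Wang}, and then account for a lower-order correction coming from the tilt of $\hat M$. First I would let $e_0$ be the future unit normal \eqref{unit_normal} of $\hat M$, let $\hat\nu$ be the outward unit normal of $\Sigma$ in $\hat M$, and take $L=e_0+\hat\nu$, $\underline L=-e_0+\hat\nu$ (so $\langle L,\underline L\rangle=2$). Then $\vec J=-\langle\vec H,\hat\nu\rangle\,e_0+\langle\vec H,e_0\rangle\,\hat\nu$. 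By the preceding Proposition, $\hat M$ is umbilical with second fundamental form $\lambda\,\hat g$, so $\langle\vec H,e_0\rangle=-2\lambda$, whereas $\langle\vec H,\hat\nu\rangle=-H$, where $H>0$ is the mean curvature of $\Sigma$ in $\hat M$. From \eqref{unit_normal} one gets $\langle e_0,\frac{\partial}{\partial t}\rangle=-f$ and the orthogonal decomposition $\frac{\partial}{\partial t}=f\,e_0-\lambda s\,\hat e_r$ along $\hat M$, where $f=\sqrt{1-\frac{2m}{s}+\lambda^2 s^2}$ and $\hat e_r$ is the outward unit radial field of $\hat M$. Hence $\langle\hat\nu,\frac{\partial}{\partial t}\rangle=-\lambda s\,\langle\hat\nu,\hat e_r\rangle$ and
\[ -\int_\Sigma\Big\langle\vec J,\frac{\partial}{\partial t}\Big\rangle\,d\mu=\int_\Sigma fH\,d\mu-2\lambda^2\int_\Sigma s\,\langle\hat\nu,\hat e_r\rangle\,d\mu. \]

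Next I would simplify the correction using the static structure of $\hat M$. Since $\bar\nabla f=\big(\frac{m}{s^2}+\lambda^2 s\big)\hat e_r$, we may write $\lambda^2 s\,\hat e_r=\bar\nabla f-\frac{m}{s^2}\hat e_r$; and representing $\Sigma$ as a radial graph $s=u(\omega)$ over $S^2$ gives the pointwise identity $\langle\hat e_r,\hat\nu\rangle\,d\mu=u^2\,d\omega_{S^2}$, whence $\int_\Sigma s^{-2}\langle\hat e_r,\hat\nu\rangle\,d\mu=4\pi$. Therefore $2\lambda^2\int_\Sigma s\,\langle\hat\nu,\hat e_r\rangle\,d\mu=2\int_\Sigma\langle\bar\nabla f,\hat\nu\rangle\,d\mu-8\pi m$, and \eqref{Sch.penrose1} is equivalent to the weighted inequality
\[ \int_\Sigma\big(fH-2\,\langle\bar\nabla f,\hat\nu\rangle\big)\,d\mu+24\pi m\geq\sqrt{16\pi|\Sigma|}. \]
Evaluating both sides on the spheres of symmetry $\{s=\text{const}\}$ yields equality, so the inequality is sharp and the argument must respect this.

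For the main term, Theorem 1 of \cite{Brendle-Hung-Wang}, which is valid for arbitrary negative cosmological constant, gives
\[ \int_\Sigma fH\,d\mu+16\pi m\geq 8\pi\Big[\Big(\tfrac{|\Sigma|}{4\pi}\Big)^{1/2}+\lambda^2\Big(\tfrac{|\Sigma|}{4\pi}\Big)^{3/2}\Big]. \]
The main obstacle is the correction term. One is tempted to conclude via the isoperimetric-type bound $\int_\Sigma s\,\langle\hat e_r,\hat\nu\rangle\,d\mu=\int_{S^2}u^3\,d\omega_{S^2}\leq 4\pi(\frac{|\Sigma|}{4\pi})^{3/2}$, which would exactly cancel the surplus $8\pi\lambda^2(\frac{|\Sigma|}{4\pi})^{3/2}$; however, this estimate is \emph{false} for large nonspherical star-shaped surfaces (for a graph with $u\to\infty$ the area integrand loses its gradient term and the power-mean inequality forces the reverse bound). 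Thus the corrected inequality does not split as "the Minkowski inequality plus a standalone isoperimetric inequality," and the real content is to show that the excess of the correction is dominated by the strict surplus in the Minkowski inequality for non-umbilic $\Sigma$. I expect the clean route is to return to the inverse-mean-curvature-flow monotonicity underlying \cite{Brendle-Hung-Wang} and prove directly that the combined quantity $\int_\Sigma\big(fH-2\langle\bar\nabla f,\hat\nu\rangle\big)\,d\mu$ is monotone along the flow with the sharp coordinate-sphere limit. The conformal Killing field $X=s\,\hat e_r$ on $\hat M$, which satisfies $\bar\nabla_a X_b=f\,\bar g_{ab}$ together with the Minkowski identity $\int_\Sigma H\,\langle X,\hat\nu\rangle\,d\mu=2\int_\Sigma f\,d\mu$, is the structural input I would use to carry out this monotonicity.
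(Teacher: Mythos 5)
Your reduction is correct and is essentially the same as the paper's: you choose $L=e_0+\hat\nu$, $\underline L=-e_0+\hat\nu$, use umbilicity of $\hat M$ to get $\vec J=He_0-2\lambda\hat\nu$, use $\langle e_0,\frac{\partial}{\partial t}\rangle=-f$ and $(\frac{\partial}{\partial t})^\top=-\lambda s\,\hat e_r$, and arrive at
$-\int_\Sigma\langle\vec J,\frac{\partial}{\partial t}\rangle\,d\mu=\int_\Sigma fH\,d\mu-2\lambda^2\int_\Sigma s\,\langle\hat\nu,\hat e_r\rangle\,d\mu$, which matches the paper's identity (the paper then converts the correction term into a bulk integral via $\operatorname{div}_{\bar g}(sf\frac{\partial}{\partial s})=3f$, whereas you package it using $\bar\nabla f$ and the flux identity $\int_\Sigma s^{-2}\langle\hat e_r,\hat\nu\rangle\,d\mu=4\pi$; these are equivalent bookkeeping choices, and your equivalent reformulation of \eqref{Sch.penrose1} checks out).

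The genuine gap is in the final step: you misquote Theorem 1 of \cite{Brendle-Hung-Wang} and, as a result, declare that the remaining content is an unproved monotonicity statement. The actual theorem is not $\int_\Sigma fH\,d\mu+16\pi m\geq 8\pi[(\frac{|\Sigma|}{4\pi})^{1/2}+\lambda^2(\frac{|\Sigma|}{4\pi})^{3/2}]$; it is the inequality for the \emph{combined} quantity
\[
\int_\Sigma fH\,d\mu-6\int_\Omega f\,d\mathrm{vol}\;\geq\;\sqrt{16\pi|\Sigma|}-8\pi s_0
\]
(for $\lambda=1$, to which the general case reduces by scaling), where $\Omega$ is the region between the horizon and $\Sigma$. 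By the divergence identity $\operatorname{div}_{\bar g}(sf\frac{\partial}{\partial s})=3f$ one has $6\int_\Omega f\,d\mathrm{vol}=2\int_\Sigma s\,\langle\hat\nu,\hat e_r\rangle\,d\mu-8\pi s_0^3$, so this theorem is \emph{exactly} the inequality for $\int_\Sigma fH\,d\mu-2\int_\Sigma s\,\langle\hat\nu,\hat e_r\rangle\,d\mu$ that you identify as "the real content" and propose to establish by a new monotonicity argument along inverse mean curvature flow; it is already proved in the cited reference, with the sharp coordinate-sphere limit. Combining it with your identity and $s_0+s_0^3=2m$ finishes the proof. Your own (correct) observation that the isoperimetric bound $\int_{S^2}u^3\,d\omega\leq 4\pi(\frac{|\Sigma|}{4\pi})^{3/2}$ fails is in fact evidence that the version of the theorem you quoted, with the pure $|\Sigma|^{3/2}$ term and no bulk integral, is not what \cite{Brendle-Hung-Wang} proves (and does not follow from it). So the argument as written does not close; the fix is simply to invoke the correct statement of the cited theorem rather than to prove a new monotonicity formula.
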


\begin{proof} 
We assume $\lambda=1$ for simplicity. (The general case can be reduced to this special case by scaling.) Consider a spacelike $2$-surface $\Sigma$ in the umbilical hyersurface $\hat{M}$. Let $\nu$ be the outward unit normal of $\Sigma$ in $\hat{M}$, and let $L=e_0+\nu$ and $\underline{L}=-e_0+\nu$ be the two null normals. The mean curvature vector $\vec{H}$ is given by $-H\nu + 2e_0$ where $H$ is the mean curvature of $\Sigma$ in $\hat{M}$ with respect to $\nu$. Consequently, the dual mean curvature vector is 
\[\vec{J} = \langle \vec{H},e_0 \rangle \, \nu - \langle \vec{H},\nu \rangle \, e_0 = He_0 - 2\nu.\] 
This implies 
\[-\int_\Sigma \langle \vec{J},\frac{\partial}{\partial t} \rangle \, d\mu = -\int_\Sigma H \, \langle e_0,\frac{\partial}{\partial t} \rangle \, d\mu + 2 \int_\Sigma \langle \nu,\frac{\partial}{\partial t} \rangle \, d\mu.\] 
As above, we identify the hypersurface $\hat{M}$ with a region in the three-dimensional Anti-deSitter-Schwarzschild manifold. The function 
\[-\langle e_0,\frac{\partial}{\partial t} \rangle = \theta^0(\frac{\partial}{\partial t}) = \sqrt{1-\frac{2m}{s}+\lambda^2 s^2} = f(s)\] 
is exactly the static potential for the Anti-deSitter-Schwarzschild space-time. Let us denote by $(\frac{\partial}{\partial t})^\top$ the component of $\frac{\partial}{\partial t}$ that is tangential to $\hat{M}$. From \eqref{unit_normal} and $\Psi_*(\frac{\partial }{\partial s})=\rho'(s)\frac{\partial}{\partial t}+\frac{\partial}{\partial r}$, we derive 
\[(\frac{\partial}{\partial t})^\top = -s f(s) \Psi_*(\frac{\partial}{\partial s}),\] 
hence 
\[\langle \nu,\frac{\partial}{\partial t} \rangle = -\langle \nu,s \, f(s) \, \Psi_*(\frac{\partial}{\partial s}) \rangle.\] 
Putting these facts together, we obtain 
\[-\int_\Sigma \langle \vec{J},\frac{\partial}{\partial t} \rangle \, d\mu = \int_\Sigma H \, f \, d\mu - 2 \int_\Sigma \langle \nu,s f(s) \frac{\partial}{\partial s} \rangle \, d\mu.\] The can be viewed as an equation on $M$ through the isometry. 
Recall $(M, \bar{g})$ is defined for $s\in [s_0, \infty)$ with 
\[\bar{g}=\frac{1}{f(s)^2} \, ds^2+s^2 \, g_{S^2}.\]
Applying the divergence theorem on $M$ gives
\[\int_\Sigma \langle \nu,s f(s) \frac{\partial}{\partial s} \rangle \, d\mu = \int_\Omega \text{\rm div}_{\bar{g}} (s f(s)\frac{\partial}{\partial s}) \, d\text{\rm vol} + \int_{\partial M} \langle \nu,s f(s)\frac{\partial}{\partial s}  \rangle \, d\mu\] 
where $\partial M$ is the horizon and $\Omega$ is the region enclosed by $\partial M$ and $\Sigma$.  A straightforward  computation shows  
\[\text{\rm div}_{\bar{g}}(sf \frac{\partial}{\partial s}) = 3f.\] 
In fact, $s f(s) \frac{\partial}{\partial s}$ is the conformal Killing field used in \cite{Brendle}.
On the other hand, on a level surface of $s$, $\nu=f(s) \, \frac{\partial}{\partial s}$ and $\langle \nu,sf(s)\frac{\partial}{\partial s} \rangle = s$. Taking the limit $s \searrow s_0$, we obtain 
\[\int_{\partial M} \langle \nu,sf(s)\frac{\partial}{\partial s} \rangle \, d\mu = 4\pi s_0^3.\]
In summary, we have shown that 
\[\int_\Sigma \langle \nu,s f(s) \frac{\partial}{\partial s} \rangle \, d\mu = \int_\Omega 3f \, d\text{\rm vol} + 4\pi s_0^3,\] 
hence 
\[-\int_\Sigma \langle \vec{J},\frac{\partial}{\partial t} \rangle \, d\mu = \int_\Sigma H \, f \, d\mu - \int_\Omega 6f \, d\text{\rm vol} - 8\pi s_0^3.\] 
Now  recall from \cite{Brendle-Hung-Wang} that for such a surface in the Anti-deSitter-Schwarzschild space $M$,
\[\int_\Sigma f H \, d\mu - 6\int_\Omega f \, d\text{\rm vol} \geq \sqrt{16\pi|\Sigma|}-8\pi s_0.\] 
Therefore, inequality \eqref{Sch.penrose1} follows by combining the last two inequalities and the defining equation \eqref{eq_s_0} of $s_0$,  which implies $s_0^3+s_0=2m$.
\end{proof}

\subsection{Surfaces in a null cone}
Let $\Sigma$ be a spacelike $2$-surface which is contained in the null hypersurface 
\[N = \{(t,r,\theta,\phi): t=s+2m \log (\frac{s}{2m}-1), r=s, s>2m\}.\] 
Let $L$ and $\underline{L}$ be the null normal vectors to $\Sigma$. Note that the future outward null normal $L$ is tangential to the null hypersurface $N$. Since $\Sigma$ is spacelike, $\Sigma$ can be written as a radial graph 
\[\Sigma = \{(t,r,\theta,\phi):  t=r+2m \log (\frac{r}{2m}-1),  r=u(\theta,\phi), (\theta, \phi)\in S^2\}\] 
for some function $u: S^2 \to (2m,\infty)$.

For each $\lambda>0$, we denote by $\rho_\lambda$ the unique solution of the ODE 
\[\rho'(s) = \frac{\lambda s}{(1-\frac{2m}{s})\sqrt{1-\frac{2m}{s}+\lambda^2 s^2}}\] 
such that $\rho_\lambda(4m)=4m$. It is easy to see that the functions $\rho_\lambda(s)$ converge smoothly to the function $s+2m \log(\frac{s}{2m}-1)$ as $\lambda\rightarrow \infty$ for $s$ in compact subintervals of $(2m,\infty)$. Let 
\[\hat{M}_\lambda = \{ (t, r, \theta, \phi): t=\rho_\lambda(s), r=s, s> 2m \}.\] 
We have seen above that $\hat{M}_\lambda$ is an umbilic hypersurface which is isometric to the $3$-dimensional Anti-deSitter-Schwarzschild manifold. Moreover, the surface 
\[\Sigma_\lambda = \{(t,r,\theta,\phi): t=\rho_\lambda({r}), r=u(\theta,\phi), (\theta, \phi)\in S^2\}\] 
can be viewed as a star-shaped surface within the Anti-deSitter-Schwarzschild manifold $\hat{M}_\lambda$. 

As $\lambda \to \infty$, the hypersurfaces $\hat{M}_\lambda$ converge smoothly to the null hypersurface $N$. Moreover, the surfaces $\Sigma_\lambda$ converge smoothly to the original spacelike $2$-surface $\Sigma$. In particular, the mean curvature vector of $\Sigma_\lambda$ converges to the mean curvature vector of $\Sigma$ as $\lambda \to \infty$, and the dual mean curvature vector of $\Sigma_\lambda$ converges to the dual mean curvature vector of $\Sigma$. 

Finally, the unit normal vector field to $\Sigma_\lambda$ within $\hat{M}_\lambda$ converges to the future outward normal vector $L$ after suitable rescaling. Since the null expansion of $\Sigma$ along $L$ is strictly positive, we conclude that the mean curvature of $\Sigma_\lambda$ (viewed as a hypersurface in $\hat{M}_\lambda$) is strictly positive when $\lambda$ is sufficiently large. Therefore, Proposition \ref{Sigma.lies.in.an.umbilic.hypersurface} implies that the Gibbons-Penrose inequality \eqref{Sch.penrose1} holds for $\Sigma_\lambda$ when $\lambda$ is sufficiently large. Taking the limit as $\lambda \to \infty$, we conclude that the Gibbons-Penrose inequality \eqref{Sch.penrose1} also holds for the original surface $\Sigma$.

\subsection{Surfaces in a convex static timelike hypersuface}
Let us consider a spacelike $2$-surface $\Sigma$ in the Schwarzschild spacetime. For abbreviation, we put $\hat{\Sigma} = \pi(\Sigma)$, where $\pi: (t,r,\theta,\phi) \mapsto (r,\theta,\phi)$ denotes the projection to the $t=0$ slice along the Killing vector field $\frac{\partial}{\partial t}$. Let us choose parametrizations $F$ and $\hat{F}$ for $\Sigma$ and $\hat{\Sigma}$ so that $F(x) = (\tau(x), \hat{F}(x))$. Clearly,  
\[\frac{\partial F}{\partial x_a} = \frac{\partial \hat{F}}{\partial x_a} + \frac{\partial \tau}{\partial x_a} \, \frac{\partial}{\partial t}\] 
Hence, the induced metric on $\Sigma$ is related to the metric on $\hat{\Sigma}$ by 
\[\hat{g}_{ab} = g_{ab} + f^2 \, \partial_a \tau \, \partial_b \tau,\] 
where, as usual, $f = \sqrt{1-\frac{2m}{r}}$. This gives 
\[\hat{g}^{ab} = g^{ab} - \frac{f^2 \, g^{ac} \, g^{bd} \, \partial_c \tau \, \partial_d \tau}{1 + f^2 \, |\nabla \tau|^2},\] 
where $|\nabla \tau|^2 = g^{ab} \, \partial_a \tau \, \partial_b \tau$.

We next relate the second fundamental form of $\Sigma$ to the second fundamental form of the projected surface $\hat{\Sigma}$. Consider the timelike hypersurface $B = \{(t,x): t \in \mathbb{R}, \, x \in \hat{\Sigma}\}$, and let $\nu$ denote the outward-pointing unit normal vector to $B$. We may extend $\nu$ to a vector field defined in an open neighborhood of $B$ such that $[\nu,\frac{\partial}{\partial t}] = 0$ and $\langle \nu,\frac{\partial}{\partial t} \rangle = 0$. 

Note that $\nu$ is a normal vector field along both $\Sigma$ and $\hat{\Sigma}$. Moreover, we have 
\begin{align*} 
\langle \frac{\partial F}{\partial x_a},D_{\frac{\partial F}{\partial x_b}} \nu \rangle 
&= \langle \frac{\partial \hat{F}}{\partial x_a},D_{\frac{\partial \hat{F}}{\partial x_b}} \nu \rangle + \frac{\partial \tau}{\partial x_a} \, \frac{\partial \tau}{\partial x_b} \, \langle \frac{\partial}{\partial t},D_{\frac{\partial}{\partial t}} \nu \rangle \\ 
&+ \frac{\partial \tau}{\partial x_a} \, \langle \frac{\partial}{\partial t},D_{\frac{\partial \hat{F}}{\partial x_b}} \nu \rangle + \frac{\partial \tau}{\partial x_b} \, \langle \frac{\partial \hat{F}}{\partial x_a},D_{\frac{\partial}{\partial t}} \nu \rangle \\ 
&= \langle \frac{\partial \hat{F}}{\partial x_a},D_{\frac{\partial \hat{F}}{\partial x_b}} \nu \rangle + \frac{\partial \tau}{\partial x_a} \, \frac{\partial \tau}{\partial x_b} \, \langle \frac{\partial}{\partial t},D_\nu \frac{\partial}{\partial t} \rangle \\ 
&+ \frac{\partial \tau}{\partial x_a} \, \langle \frac{\partial}{\partial t},D_{\frac{\partial \hat{F}}{\partial x_b}} \nu \rangle + \frac{\partial \tau}{\partial x_b} \, \langle \frac{\partial \hat{F}}{\partial x_a},D_\nu \frac{\partial}{\partial t} \rangle \\ 
&= \hat{h}_{ab} - \frac{\partial \tau}{\partial x_a} \, \frac{\partial \tau}{\partial x_b} \, f \, \nu(f),
\end{align*} 
where $\hat{h}_{ab}$ is the second fundamental form of the projected surface $\hat{\Sigma}$. This implies 
\begin{align*} 
-\langle \vec{H},\nu \rangle 
&= g^{ab} \, \langle \frac{\partial F}{\partial x_a},D_{\frac{\partial F}{\partial x_b}} \nu \rangle \\ 
&= g^{ab} \, \hat{h}_{ab} - |\nabla \tau|^2 \, f \, {\nu}(f) \\ 
&= \hat{g}^{ab} \, \hat{h}_{ab} + \frac{f^2 \, g^{ac} \, g^{bd} \, \partial_c \tau \, \partial_d \tau}{1 + f^2 \, |\nabla \tau|^2} \, \hat{h}_{ab} - |\nabla \tau|^2 \, f \, \nu(f) \\ 
&= \hat{H} + \frac{f^2 \, g^{ac} \, g^{bd} \, \partial_c \tau \, \partial_d \tau}{1 + f^2 \, |\nabla \tau|^2} \, (\hat{h}_{ab} - f^{-1} \, \nu(f) \, \hat{g}_{ab}), 
\end{align*} 
where $\hat{H} = \hat{g}^{ab} \, \hat{h}_{ab}$ denotes the mean curvature of $\hat{\Sigma}$. If $B$ is convex static in the sense of Definition \ref{convex.static}, then the tensor $\hat{h}_{ab} - f^{-1} \, \nu(f) \, \hat{g}_{ab}$ is positive semidefinite, and we obtain 
\[-\langle \vec{H},\nu \rangle \geq \hat{H}.\] 
On the other hand, we have 
\[-\langle \vec{J},\frac{\partial}{\partial t} \rangle = -\langle \vec{H},\nu \rangle \, \sqrt{-\langle \Big ( \frac{\partial}{\partial t} \Big )^\perp,\Big ( \frac{\partial}{\partial t} \Big )^\perp \rangle} = -\langle \vec{H},\nu \rangle \, f \, \sqrt{1 + f^2 \, |\nabla \tau|^2},\] 
where $|\nabla \tau|^2 = g^{ab} \, \partial_a \tau \, \partial_b \tau$. Putting these facts together, we obtain the pointwise inequality 
\[-\langle \vec{J},\frac{\partial}{\partial t} \rangle \geq \hat{H} \, f \, \sqrt{1 + f^2 \, |\nabla \tau|^2}.\] 
The volume elements of $\Sigma$ and $\hat{\Sigma}$ are related by $d\hat{\mu} =\sqrt{1+f^2 \, |\nabla\tau|^2} \, d\mu$. Hence, integrating the last equation gives 
\[-\int_\Sigma \langle \vec{J}, \frac{\partial}{\partial t}\rangle \, d\mu \geq \int_{\hat{\Sigma}} \hat{H} \, f \, d\hat{\mu}.\] 
On the other hand, since $B$ is convex static, the surface $\hat{\Sigma}$ is star-shaped and convex. Using our results above, we obtain 
\[\int_{\hat{\Sigma}} \hat{H} \, f \, d\hat{\mu}\geq \sqrt{16\pi|\hat{\Sigma}|} - 16\pi m.\] 
Hence, the desired inequality follows by observing that $|\hat{\Sigma}| \geq |\Sigma|$.

\end{document}